\documentclass[
  aps,
  pra,
  twocolumn,
  notitlepage,
  superscriptaddress,
  longbibliography
]{revtex4-2}

\usepackage[T1]{fontenc}
\usepackage{amsmath,amssymb,amsthm,amsfonts}
\usepackage{bm}
\usepackage{mathtools}
\usepackage{graphicx}
\usepackage{booktabs}
\usepackage{xcolor}
\usepackage{hyperref}
\hypersetup{hidelinks}

\newcommand{\dd}{\mathrm{d}}
\newcommand{\Tr}{\operatorname{Tr}}
\newcommand{\E}{\mathbb{E}}
\newcommand{\nth}{n_{\mathrm{th}}}

\newtheorem{theorem}{Theorem}[section]

\theoremstyle{remark}

\begin{document}

\title{Exact Conditional Momentum Moments for Nonlinear Homodyne Trajectories}

\author{Jacob Emerson}
\affiliation{Department of Electrical and Computer Engineering, Princeton University}

\date{\today}

\begin{abstract}
Simulation of nonlinear stochastic master equations generally requires
trajectory-level evolution of large Hilbert-space density matrices, making
strongly nonlinear continuously monitored systems computationally challenging.
Here we identify an exactly solvable class centered on polynomial Hamiltonians
of commuting quadratures with linear damping and homodyne measurement
of those same quadratures. For initial states with a Gaussian measured-quadrature marginal and
polynomial conditional-momentum sections, we derive an exact finite
stochastic representation through any fixed momentum order. The result extends to
interacting multimode systems, arbitrary measurement efficiency, and finite
thermal occupation. Its computational cost is linear in trajectory length and
independent of Hilbert-space dimension, requiring no Fock-space truncation,
Gaussian approximation, or approximate moment closure. As an application, we derive an exact causal filter for monitored cubic-phase
gates that uses the homodyne record to determine the variance-minimizing momentum
displacement. Numerical simulations validate the representation against direct
stochastic-master-equation evolution, demonstrate substantial speedups over
converged Fock-space propagation, and show that this record-conditioned recentering
can efficiently recover nonlinear squeezing otherwise lost through trajectory averaging.
\end{abstract}
\maketitle
\section{Introduction}
\label{sec:introduction}

Continuous quantum measurement describes quantum systems whose evolution is
conditioned on a continuously acquired measurement record
\cite{belavkin1992,wiseman_milburn2009,jacobs2014}. It underlies quantum
feedback and state estimation \cite{wiseman_milburn2009,jacobs2014}, quantum
sensing \cite{degen2017}, and measurement-based quantum information processing
\cite{wiseman_milburn2009,weedbrook2012gaussian}. Each realization of the
record generates a distinct conditional trajectory governed by a stochastic
master equation (SME), in which deterministic open-system dynamics are
accompanied by stochastic measurement backaction.

Exact reductions of quantum dynamics are known when Gaussian or Lie-algebraic
structure can be exploited.  Linear bosonic systems admit finite descriptions
in terms of means and covariances
\cite{weedbrook2012gaussian,serafini2017quantum}, while Lie-algebraic and
phase-space methods yield exact results for certain nonlinear open systems
\cite{Chaturvedi1991,Emerson2026Kerr}.  For continuously monitored nonlinear
systems, projection filters obtain tractable approximations by restricting the
conditional state to a finite-dimensional manifold
\cite{vanHandelMabuchi2005}.  Exact phase-space formulations of nonlinear
quantum filtering have also been derived, but generally remain stochastic
integro-differential equations \cite{Vladimirov2017}.  More recently, exact
model reduction has been developed for selected expectation values of
finite-dimensional quantum filters \cite{Grigoletto2025}, while exact
conditional propagators are known for linear Heisenberg-picture dynamics
\cite{Warszawski2020}.  Here we identify a class of continuously monitored nonlinear bosonic systems
whose selected moments close exactly in a finite stochastic hierarchy.

This closure arises from two complementary ingredients. First, polynomial phase
Hamiltonians \(F(\bm{Q},t)\), together with compatible momentum-linear
generators, restrict the deterministic mixing of conjugate quadratures. This structure is closely related to the low-symplectic-coherence regime, where
even strongly nonlinear continuous-variable dynamics can remain classically
tractable \cite{UpretiChabaud2025}. Second, we introduce the conditional momentum moments
\begin{equation}
m_n(q,t)
=
\frac{\displaystyle\int p^n W_c(q,p,t)\,\dd p}
     {\displaystyle\int W_c(q,p,t)\,\dd p},
\label{eq:introConditionalMomentum}
\end{equation}
where \(W_c(q,p,t)\) is the record-conditioned Wigner function. Normalization
by the measured-quadrature marginal cancels the multiplicative homodyne
innovation, leaving only differentiation with respect to \(q\). Because this
derivative lowers polynomial degree, the measurement-induced coupling is
downward and triangular.

We prove that, under linear thermal damping and aligned homodyne monitoring,
these conditional momentum moments remain finite polynomials for Gaussian
inputs and polynomial-phase deformations through every prescribed
momentum order. Moments of the form \(\langle Q^rP^n\rangle_c\) are then recovered through
finite Gaussian integration. For
fixed Hamiltonian degree, mode number, and maximum momentum order, each
trajectory is thereby reduced to a cutoff-independent finite system of coupled
Itô equations whose simulation cost is linear in the number of time steps. Shared-record comparisons with direct SME simulations verify the reduced
dynamics and quantify its computational advantage.

We apply the finite hierarchy to real-time filtering of polynomial-phase
resources.  Our principal example is the cubic phase gate, a fundamental
non-Gaussian element of universal continuous-variable processing
\cite{Lloyd1999} that has motivated measurement-induced optical schemes and
nonlinear feedforward protocols \cite{Marek2011,Konno2021,Sakaguchi2023}, as
well as a recent native implementation in a superconducting bosonic circuit
\cite{Eriksson2024}.  For a polynomial phase operation
\(U_{\Phi}=\exp[-i\Phi(Q)]\), the associated nonlinear nullifier is
\begin{equation}
N_{\Phi}=P+\Phi'(Q).
\label{eq:introNullifier}
\end{equation}
Its variance, compared with the optimized Gaussian bound, quantifies
nonlinear squeezing \cite{Marek2011,Konno2021,Kala2022}.  Different homodyne
records produce different conditional nullifier means, so discarding the
record adds their between-record variance.  The closed hierarchy evaluates
the conditional mean causally from the homodyne current and efficiently supplies
the optimal momentum displacement that recenters each trajectory.

In a platform-independent, loss-dominated cubic benchmark, optimal recentering
reduces the normalized nonlinear-squeezing ratio from \(1.128\) to \(0.685\),
removing \(39.2\%\) of the variance and recovering a value below the optimized
Gaussian threshold. Shared-record benchmarks reproduce the mixed-state SME endpoint moments with
root-mean-square discrepancies below \(8.7\times10^{-4}\) at the finest
hierarchy time step, while reducing the per-record runtime by a factor of $172\times$ for the
tested implementation.  A loss-limited projection based on
reported superconducting-circuit parameters predicts up to \(9.9\%\) variance
recovery and a \(26\%\) extension of the nonlinear-squeezing lifetime as phase
coherence improves.  Together, these results establish conditional momentum
moments as an exact, cutoff-independent route to real-time filtering of
selected observables in nonlinear bosonic systems.

\section{Physical and mathematical framework}
\label{sec:framework}

\subsection{Continuously monitored phase dynamics}
\label{subsec:frameworkSME}

We first fix conventions for one bosonic mode.  Let
\begin{equation}
Q=\frac{a+a^\dagger}{\sqrt{2}},
\qquad
P=\frac{a-a^\dagger}{i\sqrt{2}},
\qquad
[Q,P]=i,
\label{eq:quadratureDefinitions}
\end{equation}
so the vacuum variances are
\(\operatorname{Var}(Q)=\operatorname{Var}(P)=1/2\).  Our starting point is
the diffusive stochastic master equation, written in It\^o form
\cite{wiseman_milburn2009,jacobs2014},
\begin{align}
\dd\rho_c={}&-i[H(t),\rho_c]\,\dd t
+\kappa(\nth+1)\mathcal D[a]\rho_c\,\dd t
\nonumber\\
&+\kappa\nth\mathcal D[a^\dagger]\rho_c\,\dd t
+\sqrt{\eta\kappa}\,\mathcal H[a]\rho_c\,\dd W_t,
\label{eq:sme}
\end{align}
where
\begin{align}
\mathcal D[L]\rho
&=L\rho L^\dagger-\frac12\{L^\dagger L,\rho\},
\label{eq:dissipator}\\
\mathcal H[L]\rho
&=L\rho+\rho L^\dagger
-\Tr[(L+L^\dagger)\rho]\rho.
\label{eq:innovationSuperoperator}
\end{align}
Here \(\kappa\) is the energy-decay rate, \(\nth\) is the thermal occupation,
and \(\eta\in[0,1]\) is the fraction of the vacuum-loss output captured by
the homodyne detector; the excess thermal noise is unmonitored.  We use
\(\langle A\rangle_c=\Tr(A\rho_c)\) for conditional expectations.  The
homodyne record then satisfies
\begin{equation}
\dd Y_t=\sqrt{2\eta\kappa}\,\langle Q\rangle_c\,\dd t+\dd W_t,
\label{eq:homodyneRecord}
\end{equation}
where \(\dd W_t\) is a Gaussian increment with mean zero and variance
\(\dd t\); in It\^o notation, \((\dd W_t)^2=\dd t\).  All expectations below
are conditioned on the record up to time \(t\).  Equivalently,
\begin{equation}
\dd W_t
=\dd Y_t-\sqrt{2\eta\kappa}\,\langle Q\rangle_c\,\dd t
\label{eq:feedbackInnovationFromRecord}
\end{equation}
is the innovation: the measured increment minus the signal predicted by the
current conditional state.  This same increment drives the stochastic term in
Eq.~\eqref{eq:sme}, so the SME updates the state as each new piece of the
record arrives.  The efficiency \(\eta\) controls the strength of this
record-dependent update.  Noise entering through unmonitored loss or thermal
channels still affects the state through the dissipative terms in
Eq.~\eqref{eq:sme}, but is not contained in \(\dd Y_t\) and cannot be inferred
from it alone.

The difficulty of evolving selected observables directly can already be seen
from the innovation term. For an arbitrary operator \(A\), the SME gives
\begin{align}
\dd\langle A\rangle_c
={}&
\langle\mathcal L_t^\dagger A\rangle_c\,\dd t
\nonumber\\
&+
\sqrt{\frac{\eta\kappa}{2}}
\left(
\langle\{A,Q\}\rangle_c
+i\langle[A,P]\rangle_c
-2\langle A\rangle_c\langle Q\rangle_c
\right)\dd W_t,
\label{eq:ordinaryMomentInnovation}
\end{align}
where \(\mathcal L_t^\dagger\) contains the Hamiltonian and dissipative
contributions. In particular, the stochastic update of the first moment
\(\langle P\rangle_c\) depends on the mixed second moment,
\begin{equation}
\left.\dd\langle P\rangle_c\right|_{\dd W}
=
\sqrt{\frac{\eta\kappa}{2}}
\left(
\langle\{Q,P\}\rangle_c
-2\langle Q\rangle_c\langle P\rangle_c
\right)\dd W_t.
\label{eq:ordinaryMomentumCoupling}
\end{equation}
The update of \(\langle\{Q,P\}\rangle_c\) then contains a symmetrically
ordered moment proportional to \(Q^2P\), whose update contains one
proportional to \(Q^3P\), and so on. Homodyne conditioning therefore produces
an upward hierarchy in the power of the measured quadrature. For linear
Gaussian dynamics these higher correlations remain fixed by the conditional
means and covariances, but nonquadratic phase Hamiltonians destroy that
Gaussian reduction. Direct moment propagation then generally requires a
closure approximation or full-state numerical simulation. The conditional
momentum construction developed below removes precisely this upward
stochastic coupling.

The single-mode Hamiltonian class we consider is
\begin{align}
H(t)
={}&F(Q,t)+g_0(t)P+\frac{g_1(t)}{2}\{Q,P\},
\label{eq:hamiltonianClass}\\
F(Q,t)
={}&\sum_{r=0}^{d}f_r(t)Q^r,
\qquad f_r(t)\in\mathbb R.
\label{eq:phasePolynomial}
\end{align}
The last two terms generate translations and dilations of the measured
quadrature and preserve the affine structure used below.  Higher powers of
\(P\), or \(Q\)-dependent coefficients multiplying \(P\) beyond the displayed
dilation term, generally do not.  The multimode extension is stated after the
single-mode construction.

\subsection{Interaction-picture and homodyne interpretation}
\label{subsec:interactionPicture}

The absence of the free oscillator term \(\omega a^\dagger a\) in
Eq.~\eqref{eq:hamiltonianClass} means that the Hamiltonian is naturally stated in the rotating frame; $Q$ and $P$ are rotating quadratures with frequency $\omega$.  Let \(R(t)=\exp(-i\omega t a^\dagger a)\) and define
\(\rho_R=R^\dagger\rho_{\mathrm{lab}}R\).  If
\(H_{\mathrm{lab}}=\omega a^\dagger a+V_{\mathrm{lab}}(t)\), direct
differentiation gives
\begin{equation}
H_R(t)=R^\dagger(t)V_{\mathrm{lab}}(t)R(t).
\label{eq:rotatingHamiltonian}
\end{equation}
The theorem therefore applies whenever this transformed Hamiltonian has the
form in Eq.~\eqref{eq:hamiltonianClass}.  Conversely, a target interaction
\(F(Q,t)\) corresponds formally to
\begin{align}
V_{\mathrm{lab}}(t)
&=R(t)F(Q,t)R^\dagger(t)
\nonumber\\
&=F(Q\cos\omega t-P\sin\omega t,t).
\label{eq:labFramePhaseInteraction}
\end{align}
Experimentally, resonant drives can be chosen so that rapidly oscillating
terms average away in Eq.~\eqref{eq:rotatingHamiltonian}, leaving the desired
effective interaction.  This is the construction used in the demonstrated
superconducting cubic phase gate of Ref.~\cite{Eriksson2024}, whereas separate mesurement-induced ancilla-based methods exist for optical setups \cite{Miyata2016}.

For the case we consider here, measurement must be expressed in the same frame as the Hamiltonian.  Homodyne detection with
relative phase \(\theta\) measures
\(Q_\theta=Q\cos\theta+P\sin\theta\).  Locking the local-oscillator phase to
the oscillator reference keeps \(\theta\) constant in the rotating frame, and
we choose that fixed axis to be \(Q\) in Eq.~\eqref{eq:sme}.  A known constant
offset only relabels the two quadratures; an uncontrolled time-dependent phase
mixes them and lies outside the exact closure
\cite{wiseman_milburn2009,weedbrook2012gaussian}.

A static nonlinear potential in the lab frame generally transforms into a
function of both rotating-frame quadratures and therefore falls outside our class.  The result instead describes engineered interactions whose
rotating-frame Hamiltonian has Eq.~\eqref{eq:hamiltonianClass}, or time windows
over which that effective description is accurate.

\subsection{Polynomial phase nullifiers and record-based recentering}
\label{subsec:nullifierFramework}

Let the input be momentum squeezed, with
\begin{equation}
\operatorname{Var}_{\mathrm{in}}(P)=V_P<\frac{1}{2},
\end{equation}
where \(1/2\) is the vacuum variance in our quadrature convention.  This
reduced noise is the resource that a polynomial phase operation transfers to
a nonlinear quadrature.  For a \(Q\)-dependent Hamiltonian acting over
\(0\leq t\leq T\), define
\begin{equation}
U_{\Phi}=e^{-i\Phi(Q)},
\qquad
\Phi(Q)=\int_0^T F(Q,t)\,\dd t .
\end{equation}
Since
\begin{equation}
\bigl[P+\Phi'(Q)\bigr]U_{\Phi}=U_{\Phi}P,
\end{equation}
the output state
\(\lvert\psi_{\Phi}\rangle=U_{\Phi}\lvert\psi_{\mathrm{in}}\rangle\)
satisfies
\begin{equation}
\operatorname{Var}_{\psi_{\Phi}}\!\left[N_{\Phi}\right]
=V_P,
\qquad
N_{\Phi}=P+\Phi'(Q).
\label{eq:nullifierDefinition}
\end{equation}
Thus the input momentum squeezing appears after the nonlinear operation as
reduced fluctuations of \(N_{\Phi}\), called the polynomial-phase
\emph{nullifier}.  Its variance measures how well the nonlinear resource is
preserved and, when it falls below the optimized Gaussian bound discussed in Appendix \ref{app:gaussianNullifierBound}, witnesses
nonlinear squeezing \cite{Marek2018,Konno2021}.  For
\(\Phi(Q)=zQ^3/3\), the cubic-phase nullifier is
\(N_{\Phi}=P+zQ^2\).

During monitored open-system evolution, the nullifier statistics become
conditioned on the homodyne record.  In particular, the record determines the
conditional offset
\begin{equation}
\mu_t=\langle N_{\Phi}\rangle_c .
\label{eq:conditionalNullifierMean}
\end{equation}
A momentum displacement can remove this accessible offset, while leaving
attenuation and unobserved noise unchanged. The conditional-momentum
hierarchy developed below supplies \(\mu_t\) and the conditional nullifier
variance directly from the complete preceding record.  After establishing
the closure, we derive the optimal record-conditioned displacement and its
extension to time-dependent and multimode nullifiers.

\section{Exact closure of conditional momentum moments}
\label{sec:closure}

We now derive the finite conditional-momentum representation.  Integrating
over momentum isolates an autonomous Gaussian marginal for the measured
quadrature.  Normalizing the remaining momentum sections by this marginal
then converts the homodyne innovation into differentiation with respect to
\(q\), which preserves a finite polynomial space.

\subsection{Wigner dynamics and the measured marginal}
\label{subsec:wignerMarginal}

For the Hamiltonian in Eq.~\eqref{eq:hamiltonianClass}, the conditional Wigner
function obeys
\begin{align}
\dd W_c
={}&
\Bigg\{
-\partial_q\!\left[
\left(g_0+\left(g_1-\frac{\kappa}{2}\right)q\right)W_c
\right]
+\left(g_1+\frac{\kappa}{2}\right)\partial_p(pW_c)
\nonumber\\
&\quad
+\frac{\kappa}{4}(2\nth+1)
(\partial_q^2+\partial_p^2)W_c
\nonumber\\
&\quad
+\sum_{\ell=0}^{\lfloor(d-1)/2\rfloor}
\frac{(-1)^\ell}
{2^{2\ell}(2\ell+1)!}
F^{(2\ell+1)}(q,t)
\partial_p^{2\ell+1}W_c
\Bigg\}\dd t
\nonumber\\
&+
\sqrt{\frac{\eta\kappa}{2}}
\left[
2(q-\bar q_t)W_c+\partial_qW_c
\right]\dd W_t,
\label{eq:wignerSMEClosure}
\end{align}
where \(\bar q_t=\langle Q\rangle_c\) and
\(F^{(r)}=\partial_q^rF\).  The term with \(\ell=0\) is the classical force,
while \(\ell\geq1\) gives the finite sequence of Moyal corrections
\cite{moyal1949,hillery1984}.  We assume sufficient decay for the integrations
by parts below.  Appendix~\ref{app:innovationCancellation} derives the
stochastic term and the normalized It\^o quotient explicitly.

Define the measured-quadrature marginal
\begin{equation}
\rho(q,t)=\int_{-\infty}^{\infty}W_c(q,p,t)\,\dd p.
\label{eq:measuredMarginal}
\end{equation}
Every contribution from \(F\) is a total \(p\)-derivative and vanishes under
this integral, leaving
\begin{align}
\dd\rho
={}&
\Bigg\{
-\partial_q\!\left[
\left(g_0+\left(g_1-\frac{\kappa}{2}\right)q\right)\rho
\right]
+\frac{\kappa}{4}(2\nth+1)\partial_q^2\rho
\Bigg\}\dd t
\nonumber\\
&+
\sqrt{\frac{\eta\kappa}{2}}
\left[
2(q-\bar q_t)\rho+\partial_q\rho
\right]\dd W_t.
\label{eq:marginalFilter}
\end{align}
Thus \(F\) changes the conditional momentum statistics but not the measured
marginal or the probability law of the aligned homodyne record.

Equation~\eqref{eq:marginalFilter} preserves Gaussianity.  Writing
\begin{equation}
\rho(q,t)=\frac{1}{\sqrt{2\pi V_t}}
\exp\!\left[-\frac{(q-\bar q_t)^2}{2V_t}\right]
\label{eq:gaussianMeasuredMarginal}
\end{equation}
gives
\begin{align}
\dd\bar q_t
={}&
\left[
g_0(t)+\left(g_1(t)-\frac{\kappa}{2}\right)\bar q_t
\right]\dd t
\nonumber\\
&+
\sqrt{2\eta\kappa}
\left(V_t-\frac12\right)\dd W_t,
\label{eq:marginalMeanEvolution}\\
\dot V_t
={}&
2\left(g_1(t)-\frac{\kappa}{2}\right)V_t
+\frac{\kappa}{2}(2\nth+1)
\nonumber\\
&-
2\eta\kappa\left(V_t-\frac12\right)^2.
\label{eq:marginalVarianceEvolution}
\end{align}
The marginal enters the hierarchy below through these two scalar quantities
and the affine logarithmic derivative
\begin{equation}
\partial_q\ln\rho(q,t)
=-\frac{q-\bar q_t}{V_t}.
\label{eq:gaussianScoreClosure}
\end{equation}

\subsection{Conditional momentum hierarchy}
\label{subsec:normalizedHierarchy}

For each nonnegative integer \(n\), define
\begin{align}
J_n(q,t)&=\int_{-\infty}^{\infty}p^nW_c(q,p,t)\,\dd p,
\nonumber\\
m_n(q,t)&=\frac{J_n(q,t)}{\rho(q,t)},
\qquad m_0=1.
\label{eq:momentumSections}
\end{align}
Since the Wigner function need not be positive, \(m_n\) is a normalized
Wigner moment rather than a classical conditional probability moment.

Since \(m_n=J_n/\rho\), its stochastic change contains the update of the
numerator minus the change due solely to its normalization:
\begin{align}
\left.\dd m_n\right|_{\dd W}
={}&
\frac{1}{\rho}
\left(
\left.\dd J_n\right|_{\dd W}
-m_n\left.\dd\rho\right|_{\dd W}
\right)
\nonumber\\
={}&
\sqrt{\frac{\eta\kappa}{2}}\,
\partial_qm_n\,\dd W_t.
\label{eq:innovationCancellation}
\end{align}
Thus the common multiplicative backaction cancels under normalization, while
the derivative term acts only on \(m_n\).  Appendix
\ref{app:innovationCancellation} derives this result from It\^o's quotient
rule and evaluates the accompanying quadratic-variation drift.

Multiplying Eq.~\eqref{eq:wignerSMEClosure} by \(p^n\), integrating by parts,
and applying the normalized It\^o quotient gives
\begin{align}
\dd m_n
={}&
\Bigg\{
\frac{\kappa}{4}(2\nth+1)\partial_q^2m_n
\nonumber\\
&+
\Bigg[
-g_0(t)
-\left(g_1(t)-\frac{\kappa}{2}\right)q
\nonumber\\
&
+\frac{\kappa}{2}(2\nth+1-\eta)
\partial_q\ln\rho
-\eta\kappa(q-\bar q_t)
\Bigg]\partial_qm_n
\nonumber\\
&-
n\left(g_1(t)+\frac{\kappa}{2}\right)m_n
+\frac{\kappa}{4}(2\nth+1)n(n-1)m_{n-2}
\nonumber\\
&-
\sum_{\substack{1\leq r\leq\min(d,n)\\ r\ {\rm odd}}}
\frac{(-1)^{(r-1)/2}}{2^{r-1}}
\binom{n}{r}
F^{(r)}(q,t)m_{n-r}
\Bigg\}\dd t
\nonumber\\
&+
\sqrt{\frac{\eta\kappa}{2}}\,
\partial_qm_n\,\dd W_t,
\label{eq:generalConditionalHierarchy}
\end{align}
where \(m_k=0\) for \(k<0\).  The \(r=1\) contribution is
\(-nF'(q,t)m_{n-1}\), while the first Moyal correction is
\[
+\frac14\binom{n}{3}F'''(q,t)m_{n-3}.
\]
Thus every source in the equation for \(m_n\) comes from momentum order \(n\)
or below.
\subsection{Finite polynomial closure}
\label{subsec:finitePolynomialClosure}

Let \(\mathcal P_R\) denote the polynomials in \(q\) of degree at most \(R\).

\begin{theorem}[Finite conditional-momentum closure]
\label{thm:conditionalClosure}
Let \(F(Q,t)\) be a real polynomial of degree at most \(d\geq1\), with real
\(g_0(t)\) and \(g_1(t)\), arbitrary \(0\leq\eta\leq1\), and finite
\(\nth\geq0\).  Suppose that the initial measured-quadrature marginal is
Gaussian and that, through a requested momentum order \(N\),
\begin{equation}
m_n(q,0)\in\mathcal P_{\sigma_n},
\qquad 0\leq n\leq N,
\qquad \sigma_0=0,
\label{eq:initialPolynomialSections}
\end{equation}
where \(\sigma_n\) is a finite degree bound for the initial section \(m_n\).
Define
\begin{equation}
R_n=
\max_{0\leq k\leq n}
\left\{
\sigma_k+(n-k)(d-1)
\right\}.
\label{eq:generalDegreeBound}
\end{equation}
Then, along every conditional trajectory,
\begin{equation}
m_n(\cdot,t)\in\mathcal P_{R_n},
\qquad 0\leq n\leq N.
\label{eq:polynomialInvariance}
\end{equation}
The dynamics through order \(N\) therefore reduce exactly to a finite system
of stochastic differential equations, with no dependence on sections above
\(N\).
\end{theorem}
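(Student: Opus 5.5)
The plan is to show that the hierarchy \eqref{eq:generalConditionalHierarchy}, with the Gaussian marginal fixed by \eqref{eq:marginalMeanEvolution}--\eqref{eq:marginalVarianceEvolution}, maps the finite-dimensional space $\mathcal P_{R_0}\times\cdots\times\mathcal P_{R_N}$ into itself. Then a polynomial ansatz closes exactly, and uniqueness of solutions lets us identify the true sections with that ansatz.

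\textbf{Step 1: the marginal stays Gaussian.} Because $F$ drops out of \eqref{eq:marginalFilter}, that equation is a linear Kalman--Bucy-type filter. The Gaussian ansatz \eqref{eq:gaussianMeasuredMarginal} therefore stays exact, with $(\bar q_t,V_t)$ solving \eqref{eq:marginalMeanEvolution}--\eqref{eq:marginalVarianceEvolution}. The Riccati equation for $V_t$ keeps $V_t$ positive and finite on bounded intervals: at $V=0$ the right-hand side equals $\kappa(2\nth+1)/2-\eta\kappa/2\ge0$, and the $-2\eta\kappa V^2$ term prevents blow-up. Consequently the score \eqref{eq:gaussianScoreClosure} is an affine function of $q$ with continuous, record-adapted coefficients.

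\textbf{Step 2: degree counting, term by term.} Suppose $m_k\in\mathcal P_{R_k}$ for all $k\le n$. We check each term of the equation for $m_n$.
\begin{itemize}
\item Diffusion: $\partial_q^2 m_n$ has degree at most $R_n-2$.
\item Transport: the bracket multiplying $\partial_q m_n$ is affine in $q$, so the product has degree at most $R_n$.
\item Damping: $n(g_1+\kappa/2)m_n$ has degree at most $R_n$.
\item Thermal source: $m_{n-2}$ has degree $R_{n-2}\le R_n$. This holds because the definition \eqref{eq:generalDegreeBound} gives $R_{n}\ge R_{n-1}+(d-1)\ge R_{n-1}$.
\item Force and Moyal sources: $F^{(r)}m_{n-r}$ with odd $r\ge1$ has degree at most $d-r+R_{n-r}$. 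Iterating $R_n\ge R_{n-1}+(d-1)$ gives $R_n\ge R_{n-r}+r(d-1)\ge R_{n-r}+d-r$, since $r(d-1)\ge d-r$ whenever $r\ge1$ and $d\ge1$.
\item Noise: $\partial_q m_n$ has degree at most $R_n-1$.
\end{itemize}
The recursion $R_n=\max\{\sigma_n,R_{n-1}+d-1\}$ is exactly what \eqref{eq:generalDegreeBound} encodes. So the right-hand side lies in $\mathcal P_{R_n}$, and it involves only $m_k$ with $k\le n$.

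\textbf{Step 3: reduce to coefficient SDEs.} Write $m_n(q,t)=\sum_{j\le R_n}c_{n,j}(t)q^j$ and substitute into the hierarchy. Matching powers of $q$ gives a finite linear system of It\^o SDEs for the coefficients $c_{n,j}$. These coefficients are driven by the adapted processes $\bar q_t$, $1/V_t$, $g_0$, $g_1$ and $f_r$. The system is block lower-triangular in $n$, and within each block it is triangular in the degree $j$. It has continuous coefficients and linear growth, so it admits a unique strong solution.

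\textbf{Step 4 (main obstacle): identify this solution with the true sections.} It remains to show that the polynomial solution from Step 3 equals the true $m_n=J_n/\rho$ obtained from the SME. I would argue as follows. Given the polynomial $m_n$ and the Gaussian $\rho$, the products $J_n=\rho\, m_n$ satisfy the $p^n$-moment equations of \eqref{eq:wignerSMEClosure}. This holds because the It\^o quotient computation behind \eqref{eq:innovationCancellation} and \eqref{eq:generalConditionalHierarchy} is reversible whenever $\rho>0$. The true $J_n$ satisfy the same linear stochastic PDE hierarchy, which is triangular in $n$, with the same initial data.

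Uniqueness for this hierarchy should follow in a Gaussian-weighted polynomial class. Concretely, I would test against Hermite functions, or pass to the finite-moment system $\int q^j J_n\,\dd q$, which is itself closed and triangular. That argument uses the integrability and decay already assumed in the paper.

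Finally, the equations for $n\le N$ never reference $m_{n'}$ with $n'>N$, so the truncation at order $N$ is exact.
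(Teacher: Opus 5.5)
Your Steps 1--3 are the paper's proof. The paper likewise substitutes the polynomial ansatz, notes that $\partial_q$ lowers degree while the affine transport coefficient preserves it, and uses $R_{n-2}\le R_n$ for the thermal source. It then bounds $\deg(F^{(r)}m_{n-r})\le R_{n-r}+d-r\le R_n$. Your recursion $R_n=\max\{\sigma_n,R_{n-1}+d-1\}$, combined with $r(d-1)\ge d-r$, is an equivalent repackaging of Eq.~\eqref{eq:moyalDegreeEstimate}. The paper stops at this invariance statement and tacitly takes the consistent polynomial solution to be the true one. Your Step 4 therefore addresses a point the paper leaves implicit rather than one it resolves.

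The route you sketch for Step 4 does not work as stated. The system for $\int q^jJ_n\,\dd q$ is not closed and triangular. The innovation term $2(q-\bar q_t)J_n$ in Eq.~\eqref{eq:wignerSMEClosure} couples $\int q^jJ_n\,\dd q$ to $\int q^{j+1}J_n\,\dd q$ at the same $n$. That is exactly the upward hierarchy of Eq.~\eqref{eq:ordinaryMomentumCoupling} which the conditional-momentum construction exists to avoid. Testing against a fixed family of Hermite functions $h_k$ has the same defect, since $2(q-\bar q_t)h_k-h_k'$ contains $h_{k+1}$.

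Uniqueness is more naturally argued on the normalized equation, where that multiplicative noise has already cancelled. Proceed by induction on $n$, so that the lower-order sources cancel. The difference $\delta$ between the true $m_n$ and your polynomial solution then satisfies the homogeneous version of Eq.~\eqref{eq:generalConditionalHierarchy} with $\delta(\cdot,0)=0$. Evaluating $\delta$ along $q=x-\sqrt{\eta\kappa/2}\,W_t$ (It\^o--Wentzell) removes the $\dd W_t$ term. What remains is a random parabolic equation with diffusion coefficient $\frac{\kappa}{4}(2\nth+1-\eta)\geq0$ and affine drift. Uniqueness for such equations in a suitable growth class is standard, by characteristics in the degenerate case $\eta=1$, $\nth=0$. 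You would still need an a priori bound placing the true $m_n$ in that class. That is where an assumption like the paper's ``sufficient decay'' has to enter.
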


\begin{proof}
Consider
\begin{equation}
m_n(q,t)=\sum_{j=0}^{R_n}a_{n,j}(t)q^j,
\qquad 0\leq n\leq N.
\label{eq:generalClosureAnsatz}
\end{equation}
These forms contain the initial sections because the \(k=n\) term in
Eq.~\eqref{eq:generalDegreeBound} gives \(R_n\geq\sigma_n\).

In Eq.~\eqref{eq:generalConditionalHierarchy}, differentiation lowers degree,
while its affine transport coefficient preserves it.  Diffusion, damping,
and the stochastic term therefore remain in \(\mathcal P_{R_n}\).
Furthermore, \(R_{n-2}\leq R_n\), so the momentum-diffusion source remains in
the same space.

A Hamiltonian source has the form \(F^{(r)}m_{n-r}\), with odd \(r\geq1\).
Since \(\deg F^{(r)}\leq d-r\),
\begin{align}
\deg\!\left(F^{(r)}m_{n-r}\right)
&\leq R_{n-r}+d-r
\nonumber\\
&=
\max_{0\leq k\leq n-r}
\bigl\{\sigma_k+(n-k)(d-1)\bigr\}
\nonumber\\
&\qquad
+d(1-r)
\nonumber\\
&\leq R_n.
\label{eq:moyalDegreeEstimate}
\end{align}
The final inequality uses \(d(1-r)\leq0\) and \(n-r\leq n\).  It can be
saturated for the classical force \(r=1\), while higher Moyal terms lie
strictly below the bound. Thus the hierarchy preserves the polynomial spaces in
Eq.~\eqref{eq:generalClosureAnsatz} and introduces no section above \(m_n\).
Equating powers of \(q\) gives the closed coefficient equations.
\end{proof}

For an initial Gaussian state, the momentum distribution at fixed \(q\) has
an affine mean and \(q\)-independent variance, so \(\sigma_n\leq n\).  Hence,
for \(d\geq2\),
\begin{equation}
R_n=n(d-1).
\label{eq:gaussianDegreeBound}
\end{equation}
For an uncorrelated coherent or squeezed input, every initial section is
independent of \(q\), so \(\sigma_n=0\). 

\subsection{Observable reconstruction and complexity}
\label{subsec:observableReconstruction}

For every Weyl-ordered monomial with momentum order \(n\leq N\),
\begin{equation}
\left\langle\{Q^rP^n\}_{\mathrm W}\right\rangle_c
=
\int q^r\rho(q,t)m_n(q,t)\,\dd q.
\label{eq:observableReconstruction}
\end{equation}
Because the marginal is Gaussian and \(m_n\) is polynomial, this integral
reduces to finitely many moments of \(\rho\).  The number of coefficients
that must be evolved depends on the momentum order \(n\), not the position
order \(r\); increasing \(r\) changes only the final Gaussian
reconstruction.  Other monomial orderings follow from \([Q,P]=i\), with explicit
conversion formulas given in Appendix~\ref{app:orderingConversion}.

Substituting Eq.~\eqref{eq:generalClosureAnsatz} into
Eq.~\eqref{eq:generalConditionalHierarchy} and equating powers of \(q\)
gives the finite It\^o equations for the coefficients \(a_{n,j}(t)\).  The
number of equations through momentum order \(N\) is
\begin{equation}
C_N=\sum_{n=1}^{N}(R_n+1).
\label{eq:generalCoefficientCount}
\end{equation}
Using Eq.~\eqref{eq:gaussianDegreeBound} gives
\begin{equation}
C_{d,N}
=
\frac{N}{2}\left[(d-1)N+d+1\right].
\label{eq:coefficientCount}
\end{equation}
For fixed degree, initial polynomial bounds, and requested momentum order,
this count is independent of any Fock-space cutoff, and the work is linear in
the number of time steps.
\section{Worked example: cubic phase dynamics}
\label{sec:cubicExample}

We illustrate the construction for
\begin{equation}
F(Q)=\gamma Q^3,
\label{eq:cubicHamiltonian}
\end{equation}
with a coherent initial state, \(\nth=0\), and \(g_0=g_1=0\).  In this case
\(V_t=1/2\) for every efficiency \(\eta\), while
\begin{equation}
\bar q_t=\bar q_0e^{-\kappa t/2}.
\label{eq:cubicMarginalMean}
\end{equation}
The mean is deterministic even though the conditional momentum moments remain
stochastic.  We may therefore introduce the centered coordinate
\(u=q-\bar q_t\) without generating any additional stochastic terms.  Equation
\eqref{eq:generalConditionalHierarchy} becomes
\begin{align}
\dd m_n
={}&-\frac{\kappa}{2}(n+u\partial_u)m_n\,\dd t
\nonumber\\
&+\frac{\kappa}{4}\partial_u^2m_n\,\dd t
+\frac{\kappa}{4}n(n-1)m_{n-2}\,\dd t
\nonumber\\
&-3\gamma n(u+\bar q_t)^2m_{n-1}\,\dd t
\nonumber\\
&+\frac{3\gamma}{2}\binom{n}{3}m_{n-3}\,\dd t
\nonumber\\
&+\sqrt{\frac{\eta\kappa}{2}}\,
\partial_um_n\,\dd W_t.
\label{eq:cubicConditionalHierarchy}
\end{align}
The final term in the drift is the first Moyal correction and appears only for
\(n\geq3\).

At momentum order one, the degree bound gives
\begin{equation}
m_1(u,t)=a_0(t)+a_1(t)u+a_2(t)u^2.
\label{eq:cubicM1Expansion}
\end{equation}
Substitution into Eq.~\eqref{eq:cubicConditionalHierarchy} yields
\begin{align}
\dd a_2
={}&\left(-3\gamma-\frac{3\kappa}{2}a_2\right)\dd t,
\label{eq:cubicA2}\\
\dd a_1
={}&\left(-6\gamma\bar q_t-\kappa a_1\right)\dd t
+\sqrt{2\eta\kappa}\,a_2\,\dd W_t,
\label{eq:cubicA1}\\
\dd a_0
={}&\left(-3\gamma\bar q_t^2-\frac{\kappa}{2}a_0
+\frac{\kappa}{2}a_2\right)\dd t
\nonumber\\
&+\sqrt{\frac{\eta\kappa}{2}}\,a_1\,\dd W_t.
\label{eq:cubicA0}
\end{align}
For a coherent input with initial momentum mean \(\bar p_0\),
\(a_0(0)=\bar p_0\) and \(a_1(0)=a_2(0)=0\).  Gaussian projection gives
\begin{equation}
\langle P\rangle_c=a_0+\frac12a_2.
\label{eq:cubicMomentumReconstruction}
\end{equation}
Thus the infinite chain of raw moments
\(\langle P\rangle_c,\langle\{Q,P\}\rangle_c,\ldots\) is encoded by three
coefficients.

The same recursion gives \(\deg m_2\leq4\) and \(\deg m_3\leq6\), requiring
\(3+5+7=15\) coefficients through momentum order three. Because reconstruction complexity is determined by momentum order rather than
position order, the nullifiers considered later, which are linear in \(P\),
require only \(m_1\) and \(m_2\) for their conditional means and variances.
\section{Multimode extension}
\label{sec:multimode}

Let \([Q_j,P_k]=i\delta_{jk}\) for \(j,k=1,\ldots,M\), and consider
\begin{align}
H(t)={}&F(\bm Q,t)+\sum_{j=1}^{M}g_{0j}(t)P_j
\nonumber\\
&+\frac12\sum_{j,k=1}^{M}G_{jk}(t)\{Q_j,P_k\},
\nonumber\\
\bm Q={}&(Q_1,\ldots,Q_M)^T,
\label{eq:multimodeHamiltonian}
\end{align}
where \(F\) is a real polynomial of total degree at most \(d\geq1\), and
\(g_{0j}(t)\) and \(G_{jk}(t)\) are real.  The \(G\) term mixes measured
quadratures only with measured quadratures and conjugate momenta only with
conjugate momenta, preserving the separation used above.

Mode \(j\) has damping rate \(\kappa_j\), thermal occupation
\(n_{{\rm th},j}\), and homodyne efficiency \(\eta_j\).  The monitored modes
form a set \(\mathsf S\), with independent innovations \(\dd W_{j,t}\);
we set \(\eta_j=0\) for \(j\notin\mathsf S\).

Writing \(\bm q=(q_1,\ldots,q_M)^T\) and
\(\bm p=(p_1,\ldots,p_M)^T\), define the \(Q\)-marginal
\begin{equation}
\rho(\bm q,t)
=
\int_{\mathbb R^M}
W_c(\bm q,\bm p,t)\,\dd^M\bm p.
\label{eq:multimodeMarginal}
\end{equation}
As in the single-mode case, integrating over all momenta removes \(F\).  The
remaining drift,
\begin{equation}
A_{j,t}(\bm q)
=
g_{0j}(t)+\sum_{k=1}^{M}G_{kj}(t)q_k
-\frac{\kappa_j}{2}q_j,
\label{eq:multimodeAffineDrift}
\end{equation}
is affine, and the diffusion is constant.  An initially Gaussian
\(\rho(\bm q,0)\) therefore remains Gaussian, including for correlated inputs
and partial monitoring, so every
\(\partial_{q_j}\ln\rho(\bm q,t)\) is affine in \(\bm q\).

For a momentum multi-index \(\bm n=(n_1,\ldots,n_M)\), define
\begin{align}
J_{\bm n}(\bm q,t)
&=
\int_{\mathbb R^M}\bm p^{\bm n}
W_c(\bm q,\bm p,t)\,\dd^M\bm p,
\nonumber\\
m_{\bm n}(\bm q,t)
&=
\frac{J_{\bm n}(\bm q,t)}{\rho(\bm q,t)},
\qquad m_{\bm0}=1.
\label{eq:multimodeSections}
\end{align}
Here
\(\bm p^{\bm n}=\prod_jp_j^{n_j}\),
\(|\bm n|=\sum_jn_j\),
\(\partial_{\bm q}^{\bm\alpha}=\prod_j\partial_{q_j}^{\alpha_j}\), and
\(\binom{\bm n}{\bm\alpha}=\prod_j\binom{n_j}{\alpha_j}\).
Multi-index inequalities are componentwise, and \(\bm e_j\) denotes the
\(j\)th coordinate vector.  We also write
\(\bar q_{j,t}=\langle Q_j\rangle_c\).

The same integrations by parts and normalized It\^o calculation used in the
single-mode case give
\begin{align}
\dd m_{\bm n}
={}&
\Bigg\{
\sum_{j=1}^{M}
\Bigg[
\frac{\kappa_j}{4}(2n_{{\rm th},j}+1)
\partial_{q_j}^2m_{\bm n}
\nonumber\\
&+
\Bigg(
-A_{j,t}
+\frac{\kappa_j}{2}
(2n_{{\rm th},j}+1-\eta_j)
\partial_{q_j}\ln\rho
\nonumber\\
&\hspace{28mm}
-\eta_j\kappa_j(q_j-\bar q_{j,t})
\Bigg)
\partial_{q_j}m_{\bm n}
\Bigg]
\nonumber\\
&-\frac12\sum_{j=1}^{M}\kappa_jn_jm_{\bm n}
-\sum_{j,k=1}^{M}
n_jG_{jk}(t)m_{\bm n-\bm e_j+\bm e_k}
\nonumber\\
&+\sum_{j=1}^{M}
\frac{\kappa_j}{4}(2n_{{\rm th},j}+1)
n_j(n_j-1)m_{\bm n-2\bm e_j}
\nonumber\\
&-
\sum_{\substack{
\bm0<\bm\alpha\leq\bm n\\
|\bm\alpha|\leq d,\;
|\bm\alpha|\ {\rm odd}
}}
\frac{(-1)^{(|\bm\alpha|-1)/2}}
{2^{|\bm\alpha|-1}}
\binom{\bm n}{\bm\alpha}
(\partial_{\bm q}^{\bm\alpha}F)
m_{\bm n-\bm\alpha}
\Bigg\}\dd t
\nonumber\\
&+
\sum_{j\in\mathsf S}
\sqrt{\frac{\eta_j\kappa_j}{2}}\,
\partial_{q_j}m_{\bm n}\,\dd W_{j,t},
\label{eq:multimodeConditionalHierarchy}
\end{align}
where terms with negative multi-index components are absent.  The \(G\) term
redistributes momentum factors between modes while preserving
\(|\bm n|\); the diffusion and Hamiltonian sources descend in total momentum
order.

Let \(\mathcal P_R^{(M)}\) denote the polynomials in \(\bm q\) of total degree
at most \(R\).  Fix a maximum total momentum order \(N\), and suppose
\begin{equation}
m_{\bm n}(\bm q,0)
\in\mathcal P_{\sigma_{\bm n}}^{(M)},
\qquad |\bm n|\leq N,
\qquad \sigma_{\bm0}=0,
\label{eq:multimodeInitialDegrees}
\end{equation}
where \(\sigma_{\bm n}\) is a finite degree bound for the initial section.
Because \(G\) couples sections with the same total momentum order, define
\begin{align}
\sigma_s&=\max_{|\bm n|=s}\sigma_{\bm n},
\nonumber\\
R_s&=\max_{0\leq r\leq s}
\left\{\sigma_r+(s-r)(d-1)\right\}.
\label{eq:multimodeDegreeBound}
\end{align}

\begin{theorem}[Multimode finite conditional-momentum closure]
\label{thm:multimodeClosure}
If the initial \(Q\)-marginal is Gaussian and
Eq.~\eqref{eq:multimodeInitialDegrees} holds, then
\begin{equation}
m_{\bm n}(\cdot,t)
\in\mathcal P_{R_{|\bm n|}}^{(M)},
\qquad |\bm n|\leq N,
\label{eq:multimodePolynomialInvariance}
\end{equation}
along every conditional trajectory.  The dynamics through total momentum
order \(N\) therefore form an exact finite stochastic system, with no
dependence on higher-order sections.
\end{theorem}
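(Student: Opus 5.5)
The plan is to mirror the proof of Theorem~\ref{thm:conditionalClosure}, working now with the multimode hierarchy, Eq.~\eqref{eq:multimodeConditionalHierarchy}. We posit the ansatz
\[
m_{\bm n}(\bm q,t)=\sum_{|\bm\beta|\leq R_{|\bm n|}}a_{\bm n,\bm\beta}(t)\,\bm q^{\bm\beta},\qquad |\bm n|\leq N.
\]
Because the $r=s$ term in Eq.~\eqref{eq:multimodeDegreeBound} gives $R_s\geq\sigma_s\geq\sigma_{\bm n}$, this ansatz contains every initial section. We then check, term by term, that the right-hand side of Eq.~\eqref{eq:multimodeConditionalHierarchy} maps the ansatz space into $\mathcal P^{(M)}_{R_{|\bm n|}}$ for the equation governing $m_{\bm n}$. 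Equating coefficients of $\bm q^{\bm\beta}$ then yields a finite system of Itô SDEs for the $a_{\bm n,\bm\beta}$, with drift and diffusion coefficients polynomial in these unknowns and in the marginal data. Existence and uniqueness of solutions to that finite system gives polynomial invariance along every trajectory. In addition, the only sections entering the equation for $m_{\bm n}$ have total order at most $|\bm n|$, so no dependence on orders above $N$ arises.

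The term-by-term check proceeds as follows.
\begin{enumerate}
\item First, the $Q$-marginal. The multimode marginal filter has affine drift $A_{j,t}$, constant diffusion, and a measurement term of the form $2(q_j-\bar q_{j,t})\rho+\partial_{q_j}\rho$. Hence the Gaussian ansatz closes on the mean vector and covariance matrix, as stated in the text, and each $\partial_{q_j}\ln\rho=-[\Sigma_t^{-1}(\bm q-\bar{\bm q}_t)]_j$ is affine. Well-posedness of the covariance Riccati equation is assumed, as in the single-mode case.
\item The operators $\partial_{q_j}^2$ and $\partial_{q_j}$ lower total degree. Each $\partial_{q_j}m_{\bm n}$ is multiplied by an affine coefficient: $A_{j,t}$, the score, and $q_j-\bar q_{j,t}$. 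Consequently the transport, diffusion, and stochastic terms stay in $\mathcal P^{(M)}_{R_{|\bm n|}}$, and the damping term $-\tfrac12\sum_j\kappa_jn_jm_{\bm n}$ trivially does as well.
\item The $G$ term involves $m_{\bm n-\bm e_j+\bm e_k}$, which has the same total order $|\bm n|$. This is exactly why the degree bounds were defined through $\sigma_s=\max_{|\bm n|=s}\sigma_{\bm n}$: all sections of a given total order share the common space $\mathcal P^{(M)}_{R_s}$. The $G$ couplings form a closed linear block within each order-$s$ shell.
\item The momentum-diffusion source $m_{\bm n-2\bm e_j}$ lies in $\mathcal P^{(M)}_{R_{s-2}}\subseteq\mathcal P^{(M)}_{R_s}$, since $R_s$ is nondecreasing in $s$: each candidate in the max for $s-2$ is dominated by the corresponding candidate for $s$ because $d\geq1$.
\end{enumerate}

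The step requiring care, and the main obstacle, is the Hamiltonian source $(\partial^{\bm\alpha}_{\bm q}F)\,m_{\bm n-\bm\alpha}$ with $|\bm\alpha|=r$ odd and $1\leq r\leq d$. The estimate must be redone in total degree. Since $\deg\partial^{\bm\alpha}F\leq d-r$ and $|\bm n-\bm\alpha|=s-r$, with $s=|\bm n|$, we get
\[
\deg\leq R_{s-r}+d-r=\max_{0\leq k\leq s-r}\{\sigma_k+(s-k)(d-1)\}+d(1-r)\leq R_s,
\]
using $(s-r-k)(d-1)+d-r=(s-k)(d-1)+1-r-\ldots$; more precisely, $(s-r-k)(d-1)+(d-r)=(s-k)(d-1)-d(r-1)$, together with $r\geq1$ and the enlargement of the max range. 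It must also be confirmed that the multi-index sum in Eq.~\eqref{eq:multimodeConditionalHierarchy} only involves $\bm\alpha\leq\bm n$, so that every source has a nonnegative index and total order strictly below $s$. This gives the triangular structure in $s$.

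Combining the items, the ansatz space $\bigoplus_{|\bm n|\leq N}\mathcal P^{(M)}_{R_{|\bm n|}}$ is invariant, and the coefficient equations close. The block-triangular structure, with a linear $G$-block inside each shell and sources only from lower shells, makes the absence of higher-order dependence manifest.
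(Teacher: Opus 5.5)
Your proposal is correct and follows the paper's proof essentially step for step. It uses the same total-degree ansatz with a common bound $R_s$ on each shell $|\bm n|=s$, and makes the same observations: affine transport preserves degree, the $G$ term is a closed linear block within a shell, $R_{s-2}\le R_s$, and the Hamiltonian sources satisfy $R_{s-a}+d-a=\max_{0\le k\le s-a}\{\sigma_k+(s-k)(d-1)\}-d(a-1)\le R_s$. Your additions (monotonicity of $R_s$ from $d\ge1$, and the appeal to well-posedness of the finite coefficient system) only make explicit what the paper leaves implicit; you should delete the aborted identity that precedes ``more precisely''.
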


\begin{proof}
For every \(|\bm n|=s\leq N\), consider
\begin{equation}
m_{\bm n}(\bm q,t)
=
\sum_{|\bm\beta|\leq R_s}
a_{\bm n,\bm\beta}(t)\bm q^{\bm\beta}.
\label{eq:multimodeClosureAnsatz}
\end{equation}
These forms contain the initial sections because \(R_s\geq\sigma_s\).

In Eq.~\eqref{eq:multimodeConditionalHierarchy}, differentiation lowers total
degree, while the affine transport coefficients preserve it.  Diffusion,
damping, and the stochastic terms therefore remain in
\(\mathcal P_{R_s}^{(M)}\).  The \(G\) term couples only sections in the same
order-\(s\) block, while \(R_{s-2}\leq R_s\) places the momentum-diffusion
sources in the same space.

A Hamiltonian source indexed by \(\bm\alpha\), with
\(a=|\bm\alpha|\), has the form
\((\partial_{\bm q}^{\bm\alpha}F)m_{\bm n-\bm\alpha}\).  Since
\(\deg(\partial_{\bm q}^{\bm\alpha}F)\leq d-a\),
\begin{align}
\deg\!\left[
(\partial_{\bm q}^{\bm\alpha}F)
m_{\bm n-\bm\alpha}
\right]
&\leq R_{s-a}+d-a
\nonumber\\
&=
\max_{0\leq r\leq s-a}
\left\{\sigma_r+(s-r)(d-1)\right\}
\nonumber\\
&\qquad+d(1-a)
\leq R_s.
\label{eq:multimodeDegreeEstimate}
\end{align}
The final inequality follows from \(d(1-a)\leq0\) and \(s-a\leq s\).
Thus the hierarchy preserves the finite polynomial blocks in
Eq.~\eqref{eq:multimodeClosureAnsatz} and introduces no section above total
order \(s\).  Equating multivariate monomials gives the closed coefficient
equations.
\end{proof}

For a correlated Gaussian input, the conditional momentum means are affine in
\(\bm q\) and their covariance is independent of \(\bm q\), so
\(\sigma_s\leq s\).  Hence, for \(d\geq2\),
\begin{equation}
R_s=s(d-1).
\label{eq:multimodeGaussianDegreeBound}
\end{equation}
The number of coefficients through total momentum order \(N\) is then
\begin{equation}
C_{M,d,N}
=
\sum_{s=1}^{N}
\binom{M+s-1}{s}
\binom{M+s(d-1)}{M}.
\label{eq:multimodeCoefficientCount}
\end{equation}
For fixed \(M,d,N\), the work remains linear in trajectory length and
independent of all Fock cutoffs.  In particular, interacting polynomials such
as \(F(\bm Q)=\lambda Q_1Q_2Q_3\) are covered by the same construction and
do not reduce to independent single-mode filters.
\section{Numerical verification of the finite hierarchy}
\label{sec:closureNumerics}

Although the closure established above is exact in continuous time, we
verify its numerical implementation against direct Fock-space quantum
trajectories.  These comparisons test shared-record evolution, observable
reconstruction, the Moyal correction, multimode coupling, numerical
convergence, and computational cost.

All Fock-space references are mixed-state trajectories generated with
QuTiP's \texttt{smesolve} \cite{qutip5}.  For efficiency
\(\eta<1\), the observed channel \(\sqrt{\eta\kappa}\,a\) is supplied as a
stochastic collapse operator and the unobserved channel
\(\sqrt{(1-\eta)\kappa}\,a\) as an ordinary collapse operator.  We use the
positivity-preserving Rouchon update \cite{rouchon2015} as a stable numerical
integrator of the SME; it does not define a different physical model.

For a homodyne current binned over intervals of width \(\Delta t\), the
left-endpoint innovation is
\begin{equation}
\Delta W_k=
\left[J_k-\sqrt{2\eta\kappa}\,
\langle Q\rangle_{c,k}\right]\Delta t.
\label{eq:numericalInnovationExtraction}
\end{equation}
The coefficient equations are driven by this same \(\Delta W_k\) and receive
no Fock-space state or reference moment. This is therefore the same causal reconstruction that would be applied to an experimental current.

The coefficient SDEs are integrated with Euler--Maruyama.  Discrepancies can
therefore arise from the Fock cutoff and internal Rouchon step, the hierarchy time step
step, and the left-endpoint reconstruction in
Eq.~\eqref{eq:numericalInnovationExtraction}.  Shared records remove
trajectory-to-trajectory variation, while cutoff scans and time step
refinement isolate these errors.  We focus on observables containing \(P\),
because \(Q\)-only moments follow directly from the closed Gaussian marginal
and do not test the conditional-momentum hierarchy.

\subsection{Trajectory-level agreement}
\label{subsec:hierarchyVerification}
\begin{figure*}[t]
\centering
\includegraphics[width=\textwidth]{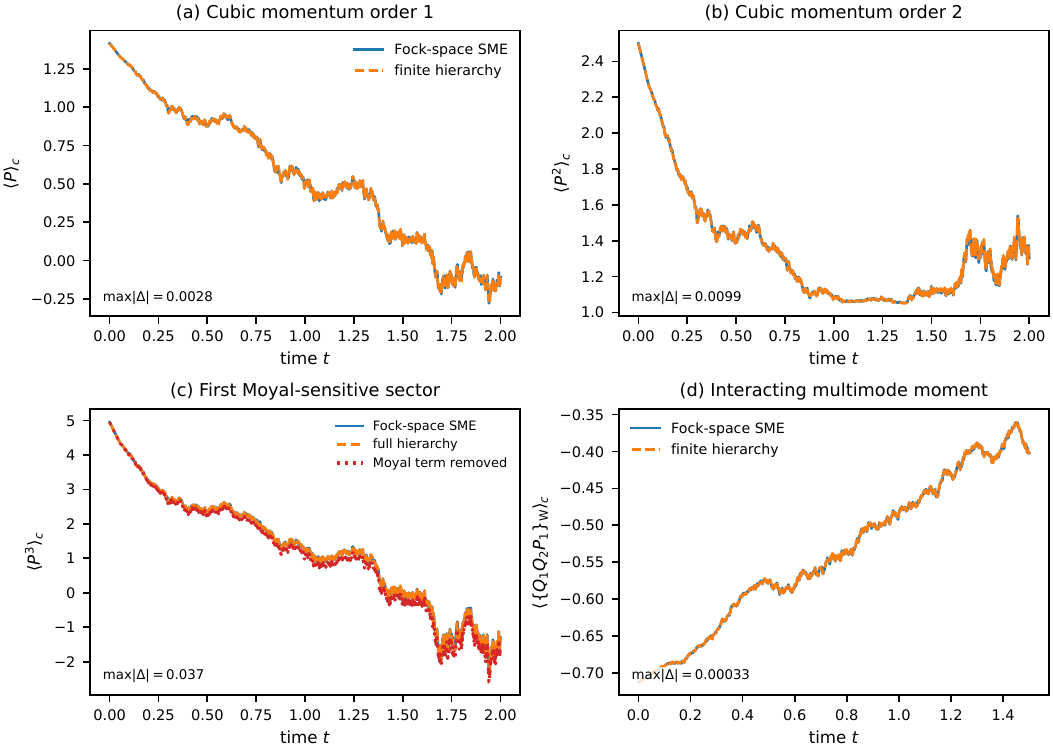}
\caption{%
Shared-record validation against mixed-state Fock-space SMEs at \(\eta=0.8\).
(a)--(c) Conditional momentum moments for a cubic Hamiltonian.  The finite
hierarchy follows the SME through \(P^3\), whereas deleting the first Moyal
source visibly fails in panel (c).  (d) Agreement for a genuinely bivariate
moment under \(0.10Q_1^2Q_2\).  Annotations give the maximum absolute
trajectory discrepancies.}
\label{fig:exactHierarchyValidation}
\end{figure*}

Figure~\ref{fig:exactHierarchyValidation} compares the finite hierarchy with
the density-matrix SME at \(\eta=0.8\).  The single-mode example uses
\(F(Q)=0.15Q^3\), \(\kappa=0.5\), \(\alpha=1+i\), \(T=2\), Fock cutoff 100,
record spacing \(5\times10^{-4}\), and internal Rouchon step
\(10^{-4}\).  The root-mean-square discrepancies in
\(\langle P\rangle_c\), \(\langle P^2\rangle_c\), and
\(\langle P^3\rangle_c\) are \(9.35\times10^{-4}\),
\(2.73\times10^{-3}\), and \(9.70\times10^{-3}\), respectively.  The
population in the highest five Fock levels is \(4.05\times10^{-9}\).

The third-momentum sector tests the first genuinely quantum term in the
hierarchy.  For \(F(Q)=\gamma Q^3\), its equation contains the Moyal source
\begin{equation}
\frac{3\gamma}{2}m_0.
\label{eq:cubicFirstMoyalSource}
\end{equation}
Removing only this term increases the \(P^3\) discrepancy to \(1.65\times
10^{-1}\).  The agreement of the complete hierarchy is therefore not
explained by a classical drift--diffusion model.

The multimode test uses \(F(Q_1,Q_2)=0.10Q_1^2Q_2\), \(\kappa=0.5\),
\(\alpha_1=0.8+0.7i\), \(\alpha_2=-0.45+0.65i\), cutoff 16 per mode,
\(T=1.5\), and the same \(\eta=0.8\).  The mixed moment
\(\langle\{Q_1Q_2P_1\}_{\rm W}\rangle_c\), which cannot be reconstructed
from independent single-mode filters, has root-mean-square discrepancy
\(1.48\times10^{-4}\). 
Euler--Maruyama has strong order one half \cite{Higham2001}, so the
$\Delta t=5\times10^{-4}$ time step corresponds to a nominal
$\sqrt{\Delta t}\simeq2.2\times10^{-2}$ discretization scale before
problem-dependent prefactors.  The observed discrepancies, ranging from
$1.48\times10^{-4}$ to $9.70\times10^{-3}$, are consistent with this scale
and decrease under refinement in Fig.~\ref{fig:convergenceScaling}(a).

\subsection{Convergence and cutoff-independent cost}
\label{subsec:convergenceScaling}

\begin{figure*}[t]
\centering
\includegraphics[width=\textwidth]
{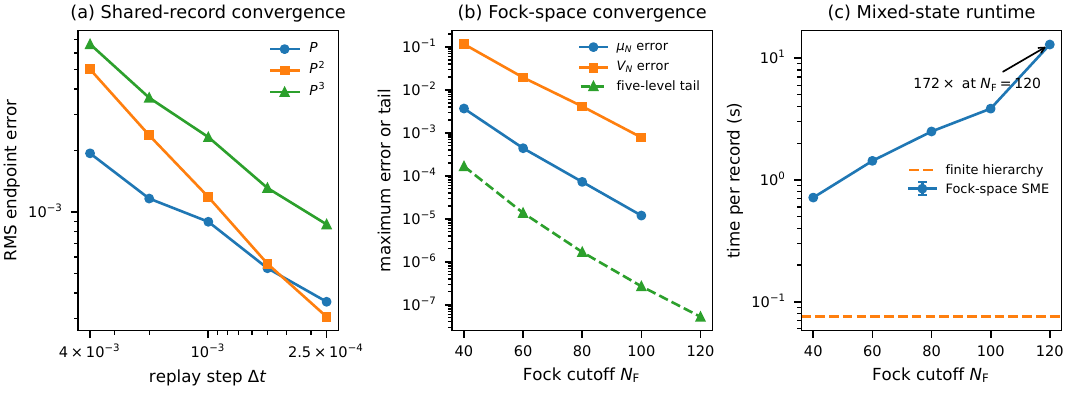}
\caption{%
Accuracy and cost for mixed-state SME benchmarks.  (a) Shared-record endpoint
errors as the hierarchy time step is refined.  (b) Maximum-in-time
nullifier-mean and nullifier-variance deviations from cutoff 120, together
with the final population in the highest five retained Fock levels, shown on
the same logarithmic axis for convenience.   The population is an independent
diagnostic of occupation near the truncation boundary. The cutoff-120 error points are zero by definition and are omitted from
the logarithmic plot. (c) Median time for one density-matrix SME record versus Fock cutoff;
the dashed line is the finite-hierarchy time.  Error bars in (c) show the
interquartile range of three runs.  Absolute runtimes are machine dependent.}
\label{fig:convergenceScaling}
\end{figure*}

Figure~\ref{fig:convergenceScaling}(a) repeats the single-mode comparison for
12 mixed-state SME records while refining only the time step.  At
\(\Delta t=2.5\times10^{-4}\), the endpoint root-mean-square errors are
\(3.62\times10^{-4}\), \(3.06\times10^{-4}\), and
\(8.68\times10^{-4}\) for \(P\), \(P^2\), and \(P^3\), respectively, and
all three decrease monotonically under step refinement.

Panels (b) and (c) use a more demanding four-decibel squeezed input with
\(F(Q)=0.35Q^3\), \(\kappa T=0.75\), and \(\eta=0.8\), similar to the
cubic nullifier considered below.  Relative to cutoff 120, the maximum
nullifier-variance deviation decreases from \(1.17\times10^{-1}\) at cutoff
40 to \(7.85\times10^{-4}\) at cutoff 100.  As an independent truncation
diagnostic, panel (b) also shows the final population
\[
p_{\mathrm{tail}}(T)
=
\sum_{n=N_F-5}^{N_F-1}
\langle n|\rho_c(T)|n\rangle
\]
in the highest five retained Fock levels.  Its maximum across the three runs
falls from \(1.67\times10^{-4}\) at cutoff 40 to \(5.25\times10^{-8}\) at
cutoff 120, showing that the artificial boundary becomes negligibly
occupied.

The same runs provide the timing comparison.  At cutoff 120 the median SME
time is \(12.94\,\mathrm{s}\), while the cutoff-independent hierarchy takes
\(0.075\,\mathrm{s}\) per record when averaged across the scan, a factor of
approximately $172$ for this workload. Although absolute timings depend on hardware and implementation, the
structural result is that the hierarchy size is independent of the physical
parameter values and Fock-space cutoff.
\section{Optimal record-based recentering of polynomial nullifiers}
\label{sec:feedback}

We now use the finite hierarchy to extract and optimally recenter polynomial
nullifiers from the measurement record.

\subsection{Conditional statistics and optimal displacement}
\label{subsec:nullifierStatistics}

Generalizing the polynomial-phase nullifier in
Eq.~\eqref{eq:nullifierDefinition}, let \(N_t=P+G(Q,t)\), where \(G\) is a
real polynomial.  Applying the reconstruction formula
Eq.~\eqref{eq:observableReconstruction} to its Weyl-ordered moments gives
\begin{align}
\mu_t
&=
\int\rho(q,t)\,[m_1(q,t)+G(q,t)]\,\dd q,
\label{eq:nullifierMeanFromSections}\\
s_t
&=
\int\rho(q,t)\,
[m_2(q,t)+2G(q,t)m_1(q,t)+G(q,t)^2]\,\dd q.
\label{eq:nullifierSecondMomentFromSections}
\end{align}
Thus \(V_{N,c}(t)=s_t-\mu_t^2\) is determined exactly by \(m_1\) and
\(m_2\), independently of the degree of \(G\).  Driven causally by the
innovation in Eq.~\eqref{eq:feedbackInnovationFromRecord}, the Gaussian
marginal and finite coefficients determine both quantities from the record
up to time \(t\).

Let \(Y\) denote the available record and
\(D_P(f)=\exp[-if(Y)Q]\).  Since
\(D_P(f)^\dagger N_tD_P(f)=N_t-f(Y)\), the conditional error decomposes as
\begin{align}
\left\langle[N_t-f(Y)]^2\right\rangle_c
&=
V_{N,c}(t)+[\mu_t-f(Y)]^2,
\label{eq:conditionalRiskDecomposition}\\
f_{\rm opt}(Y)&=\mu_t.
\label{eq:optimalRecordDisplacement}
\end{align}
The optimal correction therefore requires neither a fitted regression model
nor a Gaussian approximation to the conditional state.

Equivalently, the law of total variance gives
\begin{align}
\operatorname{Var}_{\rm unc}(N_t)
&=
\E[V_{N,c}(t)]+\operatorname{Var}(\mu_t),
\label{eq:totalNullifierVariance}\\
\operatorname{Var}_{\rm corr}(N_t)
&=
\E[V_{N,c}(t)]
\nonumber\\
&=
\operatorname{Var}_{\rm unc}(N_t)-\operatorname{Var}(\mu_t).
\label{eq:feedbackVarianceReduction}
\end{align}
The removable contribution is therefore exactly the between-record variance
of the conditional mean.  Recentring does not reduce the variance within a
trajectory or recover information lost through attenuation or unobserved
noise.

The same argument applies to the multimode hierarchy of
Sec.~\ref{sec:multimode}.  For a polynomial phase \(\Phi(\bm Q)\), define
\begin{equation}
N_j=P_j+\partial_{Q_j}\Phi(\bm Q),
\qquad j=1,\ldots,M.
\label{eq:multimodeNullifiers}
\end{equation}
The sectors \(|\bm n|\leq2\) determine the conditional mean vector
\(\bm\mu_t\) and covariance \(\bm\Sigma_c\).  Because the corresponding
momentum displacements commute, applying
\(\exp[-i\sum_j\mu_{j,t}Q_j]\) gives
\begin{equation}
\bm\Sigma_{\rm corr}
=
\E[\bm\Sigma_c]
=
\bm\Sigma_{\rm unc}-\operatorname{Cov}(\bm\mu_t).
\label{eq:multimodeFeedbackCovariance}
\end{equation}
The interacting example considered below is the elementary triadic
continuous-variable hypergraph resource \cite{Moore2019},
\begin{equation}
U=\exp(-i\lambda Q_1Q_2Q_3),
\qquad
N_j=P_j+\lambda\!\prod_{k\neq j}Q_k.
\label{eq:triadicHypergraphResource}
\end{equation}
Nonlinear nullifiers of these resources and their robustness to loss and
thermalization were recently analyzed in Ref.~\cite{Ravikumar2026}; here we
instead ask how much of the variance generated during monitored loss can be
removed using the measurement record.

\subsection{Nonlinear-squeezing criterion}
\label{subsec:nonlinearSqueezingCriterion}

For the monomial nullifier
\begin{equation}
N_{r,z}=P+zQ^r,
\qquad r\geq2,
\label{eq:monomialNullifier}
\end{equation}
we normalize its variance by the optimized Gaussian bound,
\begin{equation}
\xi_r(z)
=
\frac{\operatorname{Var}(N_{r,z})}{B_r(z)},
\qquad
B_r(z)
=
\min_{\psi_{\rm G}}
\operatorname{Var}_{\psi_{\rm G}}(N_{r,z}).
\label{eq:nonlinearSqueezingRatio}
\end{equation}
Thus \(\xi_r(z)<1\) certifies nonlinear squeezing below the minimum
attainable by a Gaussian state
\cite{Marek2011,Konno2021,Kala2022}.

Appendix~\ref{app:gaussianNullifierBound} explicitly performs the
optimization over Gaussian displacement, covariance, and quadrature
variance.  Defining
\begin{align}
C_r
&=
(2r-1)!!
-\delta_{r\,\mathrm{even}}[(r-1)!!]^2
\nonumber\\
&\quad
-r^2\delta_{r\,\mathrm{odd}}[(r-2)!!]^2,
\label{eq:gaussianBoundCoefficient}\\
v_*(r,z)
&=
(4rC_rz^2)^{-1/(r+1)},
\label{eq:optimalGaussianVariance}
\end{align}
the calculation yields
\begin{equation}
B_r(z)=\frac{1}{4v_*}+C_rz^2v_*^r.
\label{eq:gaussianNullifierBound}
\end{equation}
Here \(n!!=n(n-2)(n-4)\cdots\), with \(0!!=(-1)!!=1\), while
\(\delta_{r\,\mathrm{even}}\) and \(\delta_{r\,\mathrm{odd}}\) select the
indicated parity of \(r\).

The uncorrected and optimally recentered witnesses used below are therefore
\begin{align}
\xi_{\rm unc}
&=
\frac{\E[V_{N,c}(T)]+\operatorname{Var}(\mu_T)}{B_r(z)},
\nonumber\\
\xi_{\rm corr}
&=
\frac{\E[V_{N,c}(T)]}{B_r(z)}.
\label{eq:numericalWitnessRatios}
\end{align}
The first discards the measurement record, whereas the second applies the
optimal displacement before averaging over trajectories.
\section{Numerical recovery of nonlinear squeezing}
\label{sec:applicationNumerics}
Having derived the optimal record-conditioned displacement, we now quantify
the nonlinear squeezing it recovers, first in favorable regimes and then in
a loss-limited projection based on current experimental parameters.

The single-mode simulations begin with a momentum-squeezed Gaussian state and
evolve under \(F(Q)=\gamma Q^d\), damping, and aligned homodyne monitoring.
At time \(T\), the target nullifier has \(r=d-1\) and \(z=d\gamma T\).
Only \(m_1\) and \(m_2\) are propagated, from which each trajectory supplies
\(\mu_T\) and \(V_{N,c}(T)\).  Confidence intervals resample complete
trajectories, and every reported conclusion is checked under time-step
refinement.  Production runs use a step for which no sampled conditional
variance is negative.
\subsection{Single-mode recovery}
\label{subsec:cubicRecovery}
\begin{figure*}[!t]
\centering
\includegraphics[width=\textwidth]{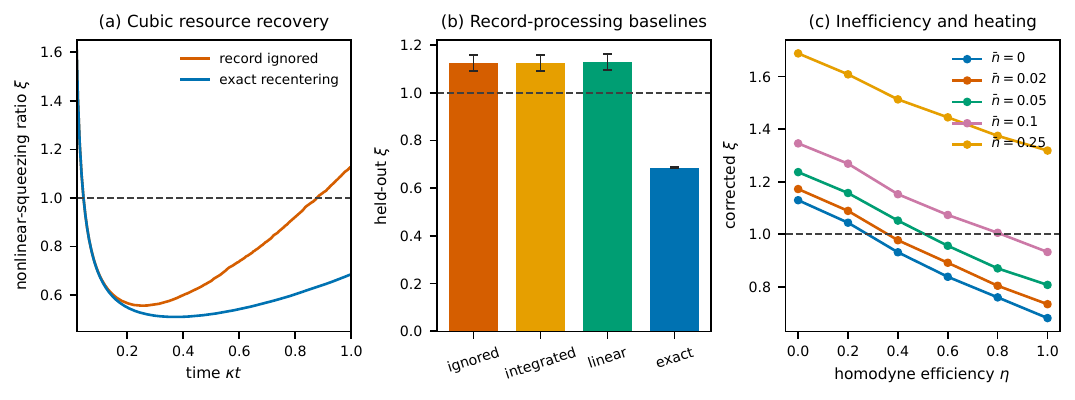}
\caption{%
Record-assisted recovery for a favorable cubic resource.  (a) Nonlinear-
squeezing ratio during the phase evolution with the record ignored and after
exact recentering; \(\xi=1\) is the optimized Gaussian threshold.  (b)
Held-out record-processing baselines with 95\% trajectory-bootstrap
intervals.  (c) Corrected endpoint ratio versus homodyne efficiency and
thermal occupation.  Unless varied in (c), \(z=1\), \(\kappa T=1\), the input
squeezing is 7 dB, \(\eta=1\), and \(\nth=0\).}
\label{fig:singleModeRecovery}
\end{figure*}

The first application is intentionally loss dominated, showing the size of
the recoverable effect rather than modeling a specific device.  We use a
cubic phase evolution with \(z=1\), \(\kappa T=1\), seven decibels of initial
momentum squeezing, \(\eta=1\), and \(\nth=0\).  An ensemble of
\(2.0\times10^4\) trajectories is propagated with
\(\kappa\Delta t=2.5\times10^{-4}\).  At the endpoint,
\begin{equation}
\xi_{\rm unc}=1.12768,
\qquad
\xi_{\rm corr}=0.68530.
\label{eq:cubicHeadlineRatios}
\end{equation}
Exact recentering removes \(39.2\%\) of the total nullifier variance and
restores a nonlinear-squeezing witness that is lost when the record is
ignored.

Figure~\ref{fig:singleModeRecovery}(b) compares the exact filter with two
held-out baselines trained on the same binned currents.  The ratios are
\(1.1257\), \(1.1262\), \(1.1294\), and \(0.6861\) for no correction, an
integrated-current gain, a 40-bin linear temporal filter, and the exact
conditional mean.  The corresponding 95\% intervals for the ignored and
exact results are \([1.0919,1.1582]\) and \([0.6842,0.6883]\).  The linear
baselines do not recover the witness.  This does not exclude a sufficiently
expressive nonlinear learned estimator, but the hierarchy supplies the
required statistic without training.

The correction degrades continuously as information is lost.  On the grid in
panel (c), \(\xi_{\rm corr}<1\) is retained down to \(\eta=0.4\) for
\(\nth\leq0.02\), to \(\eta=0.6\) for \(\nth=0.05\), and only at
\(\eta=1\) for \(\nth=0.1\).  No crossing remains for \(\nth=0.25\).

\subsection{Higher-degree and multimode coupling}
\label{subsec:generalityRecovery}
\begin{figure*}[!t]
\centering
\includegraphics[width=\textwidth]{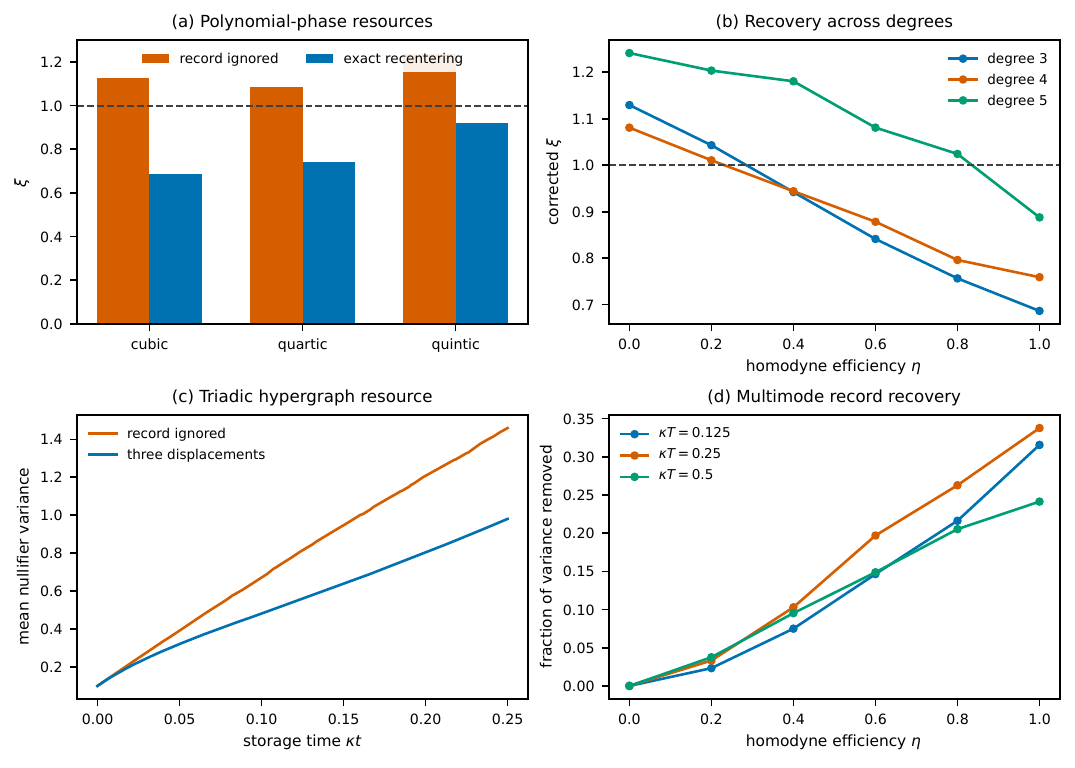}
\caption{%
Polynomial and multimode generality.  (a) Record-ignored and recentered
ratios for representative cubic, quartic, and quintic resources.  (b)
Corrected ratio versus efficiency for the same points.  (c) Mean nullifier
variance for the interacting resource \(\exp(-i\lambda Q_1Q_2Q_3)\), with
the records ignored and after three conditional displacements.  (d) Fraction
of multimode variance removed versus efficiency at three storage times.}
\label{fig:degreeMultimodeGenerality}
\end{figure*}

Figure~\ref{fig:degreeMultimodeGenerality}(a) applies the same calculation to
three polynomial degrees.  The pairs
\((\xi_{\rm unc},\xi_{\rm corr})\) are
\begin{equation}
\begin{aligned}
d=3:&\quad(1.12768,0.68530),\\
d=4:&\quad(1.08758,0.74301),\\
d=5:&\quad(1.23618,0.91938).
\end{aligned}
\label{eq:degreeCrossings}
\end{equation}
The corresponding parameters are cubic
\((z=1,\kappa T=1,7\,\mathrm{dB})\), quartic
\((z=0.5,\kappa T=0.25,5\,\mathrm{dB})\), and quintic
\((z=0.1,\kappa T=0.5,5\,\mathrm{dB})\).  Panel (b) shows that the
efficiency required to cross the Gaussian bound generally increases with degree for
these representative points.

Panels (c) and (d) evaluate the triadic resource in
Eq.~\eqref{eq:triadicHypergraphResource}.  With seven-decibel inputs,
\(\lambda=1.5\), \(\kappa T=0.25\), and unit efficiency, three simultaneous
conditional displacements remove \(33.4\%\), \(33.1\%\), and \(31.9\%\) of
the respective nullifier variances.  This uses the multimode hierarchy
directly and cannot be reduced to three independent single-mode filters.

\subsection{Loss-limited hardware projection}
\label{subsec:experimentProjection}

\begin{figure*}[!t]
\centering
\includegraphics[width=\textwidth]{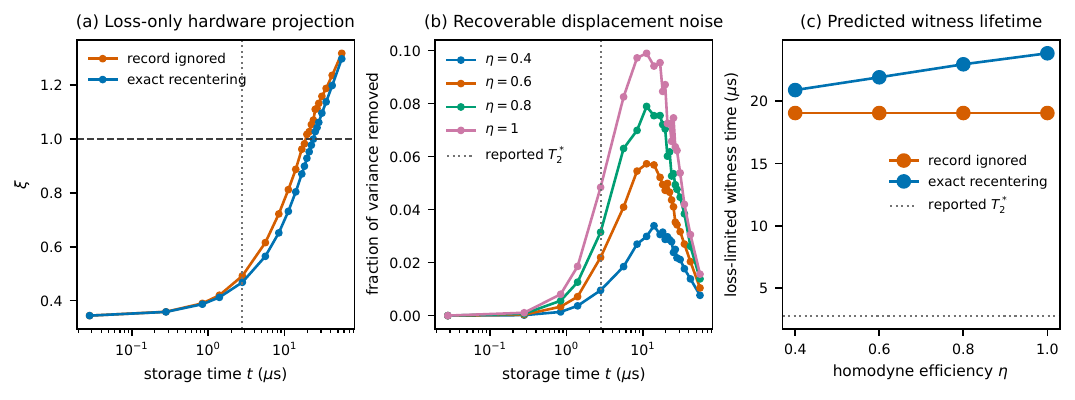}
\caption{%
Loss-limited projection using parameters from
Ref.~\cite{Eriksson2024}.  (a) Nonlinear-squeezing ratio with the record
ignored and after exact recentering at unit efficiency.  (b) Recoverable
fraction versus storage time and efficiency.  (c) Time at which the ratio
crosses \(\xi=1\).  Dotted lines mark the reported \(T_2^*\); number
dephasing is not included.}
\label{fig:hardwareProjection}
\end{figure*}

To project performance at current hardware scales,
Figure~\ref{fig:hardwareProjection} uses parameters reported for a
superconducting cubic-phase resource in Ref.~\cite{Eriksson2024}: 5.3 dB of
squeezing, cubicity 0.11, corresponding to an initial nullifier coefficient
\(z_0=0.33\), \(\nth=0.024\), \(T_1=28\,\mu\mathrm{s}\), and
\(T_2^*=2.8\,\mu\mathrm{s}\).  At time \(t\), we evaluate the
attenuation-matched nullifier \(N_t=P+z(t)Q^2\), where
\(z(t)=z_0e^{t/(2T_1)}\); this compensates for the relative attenuation of
\(P\) and \(Q^2\).  We additionally assume continuous phase-aligned homodyne
monitoring throughout the evolution, followed by the record-conditioned
displacement; this complete loop was not implemented in that experiment.

Here \(\eta\) is the total monitoring efficiency, combining the fraction of
the oscillator decay entering the monitored output with the efficiency with
which that output is detected.  Thus \(\eta=1\) represents an ideal
loss-limited upper bound.

The model includes amplitude damping, thermal diffusion, inefficient
monitoring, and homodyne backaction, but omits the additional phase
randomization represented by the reported \(T_2^*\), which was dominated by
low-frequency oscillator-frequency noise.  A Markovian approximation to such
noise would introduce a number-dephasing term proportional to
\(\mathcal D[a^\dagger a]\rho\), which lies outside the closed hierarchy
considered here.  Figure~\ref{fig:hardwareProjection} is therefore a
loss-limited projection rather than a simulation of the complete reported
device.

At the reported \(T_2^*\), the unit-efficiency model gives
\(\xi_{\rm unc}=0.4917\) and \(\xi_{\rm corr}=0.4679\), a \(4.8\%\)
variance reduction.  The recoverable fraction reaches \(9.9\%\) near
\(11.2\,\mu\mathrm{s}\).  Defining the loss-limited witness lifetime by the
first crossing of \(\xi=1\), recentering delays it from
\(18.9\,\mu\mathrm{s}\) to \(23.8\,\mu\mathrm{s}\) at unit efficiency, an
extension of \(4.9\,\mu\mathrm{s}\), or approximately \(26\%\).  The
corrected crossing times at \(\eta=0.4,0.6,0.8,1\) are
\(20.9,21.9,23.0,\) and \(23.8\,\mu\mathrm{s}\), respectively.

Since the reported \(T_2^*\) is shorter than the peak-recovery timescale,
the projection identifies improved phase coherence as the key hardware
advance for realizing the full benefit of record-assisted recentering.

\section{Discussion}
\label{sec:discussion}

The shared-record comparisons show that the finite hierarchy reproduces the
same conditional observables as the mixed-state SME, including Moyal
corrections and correlations generated by multimode interactions.  Rather than
approximating the conditional state, the hierarchy evolves only the
coefficients needed for those observables.  This is the source of its
computational advantage.  The measured speedup is hardware and implementation specific,
but the independence of Fock-space cutoff is structural.

The nullifier application gives this reduction a direct operational role.
The hierarchy extracts the exact conditional mean and variance from the
homodyne current, and the conditional mean determines the
minimum-mean-square displacement.  The tested linear record and integrated processors do not
recover the same information.  This displacement removes the between-record
variance \(\operatorname{Var}(\mu_T)\) while leaving the average conditional
variance unchanged.

The hardware projection shows how the attainable
benefit diminishes in the presence of unobserved noise.  The exact filter still determines
the optimal displacement, but it can remove only fluctuations represented in
the monitored current.  The recovery depends on how much of the
degradation is carried by the monitored channel.  This accounts for the modest
correction on the reported coherence timescale and the larger effect in the
loss-limited projection.

The method is targeted rather than universal.  Exact closure requires the
measured--conjugate separation identified above, and the coefficient count
still grows with mode number, polynomial degree, and requested momentum order.
It is therefore best suited to real-time tasks requiring a limited observable
sector, including nullifier tracking, commuting multimode feedforward, and
verification of non-Gaussian resources.  The same exact trajectories may also
serve as references for approximate filters and learned record processors.
\section{Conclusion}
\label{sec:conclusion}

We introduced conditional momentum moments as an exact finite representation
of selected observables in aligned-homodyne polynomial-phase dynamics.  Their
closure converts the measurement-induced upward moment hierarchy into a
finite system whose size is determined by the requested momentum order rather
than a Hilbert-space cutoff.  Applied to nonlinear nullifiers, the same
representation yields the exact causal displacement that removes all
record-accessible fluctuations of the conditional mean.

More broadly, these results show that real-time quantum filtering need not
always require real-time state reconstruction.  When the observable sector
has the structure identified here, propagating only the information required
by the eventual correction or diagnostic can retain the exact SME prediction
at substantially lower cost.  This provides a concrete route toward
low-latency monitoring, verification, and feedforward for polynomial-phase
and multimode non-Gaussian resources as experimental coherence and continuous
measurement capabilities improve.

\section*{Data Availability}

The source code, simulation parameters, and numerical data used to generate
all figures are publicly available in Ref.~\cite{EmersonCMMData}.
\appendix
\section{Normalization of the homodyne innovation}
\label{app:innovationCancellation}

From Eq.~\eqref{eq:wignerSMEClosure}, the stochastic Wigner update is
\begin{equation}
\left.\dd W_c\right|_{\rm stoch}
=
\sqrt{\frac{\eta\kappa}{2}}\,
\mathcal K_tW_c\,\dd W_t,
\qquad
\mathcal K_t=2(q-\bar q_t)+\partial_q.
\label{eq:appendixWignerInnovation}
\end{equation}
Because \(\mathcal K_t\) contains no \(p\), it commutes with the momentum
integrals defining \(\rho\) and \(J_n\).  Hence
\begin{align}
\left.\dd\rho\right|_{\rm stoch}
&=
\sqrt{\frac{\eta\kappa}{2}}\,
\mathcal K_t\rho\,\dd W_t,
\nonumber\\
\left.\dd J_n\right|_{\rm stoch}
&=
\sqrt{\frac{\eta\kappa}{2}}\,
\mathcal K_tJ_n\,\dd W_t.
\label{eq:appendixSectionInnovations}
\end{align}

At fixed \(q\), It\^o's quotient rule for \(m_n=J_n/\rho\) is
\begin{align}
\dd m_n
={}&
\frac{\dd J_n}{\rho}
-\frac{J_n}{\rho^2}\dd\rho
-\frac{\dd J_n\,\dd\rho}{\rho^2}
+\frac{J_n}{\rho^3}(\dd\rho)^2.
\label{eq:appendixItoQuotient}
\end{align}
The first two terms give the coefficient of \(\dd W_t\).  Using
\(J_n=m_n\rho\),
\begin{align}
\left.\dd m_n\right|_{\dd W}
={}&
\sqrt{\frac{\eta\kappa}{2}}\,
\frac{
\mathcal K_t(m_n\rho)-m_n\mathcal K_t\rho
}{\rho}\,\dd W_t.
\label{eq:appendixNoiseBeforeCancellation}
\end{align}
Expanding \(\mathcal K_t\) and applying the product rule,
\begin{align}
&\mathcal K_t(m_n\rho)-m_n\mathcal K_t\rho
\nonumber\\
={}&
2(q-\bar q_t)m_n\rho+\partial_q(m_n\rho)
\nonumber\\
&-
m_n\!\left[2(q-\bar q_t)\rho+\partial_q\rho\right]
\nonumber\\
={}&
\rho\,\partial_qm_n.
\label{eq:appendixExplicitNoiseCancellation}
\end{align}
The common multiplicative update and the derivative of the normalization
therefore cancel, leaving
\begin{equation}
\left.\dd m_n\right|_{\dd W}
=
\sqrt{\frac{\eta\kappa}{2}}\,
\partial_qm_n\,\dd W_t.
\label{eq:appendixNoiseCancellation}
\end{equation}

The last two terms in Eq.~\eqref{eq:appendixItoQuotient} give the associated
quadratic-variation drift:
\begin{align}
\left.\dd m_n\right|_{\rm qv}
={}&
-\frac{\eta\kappa}{2}
\frac{(\mathcal K_tJ_n)(\mathcal K_t\rho)}{\rho^2}\,\dd t
\nonumber\\
&+
\frac{\eta\kappa}{2}
\frac{J_n(\mathcal K_t\rho)^2}{\rho^3}\,\dd t.
\label{eq:appendixQuadraticVariation}
\end{align}
Equation~\eqref{eq:appendixExplicitNoiseCancellation} implies
\begin{equation}
\mathcal K_tJ_n
=
m_n\mathcal K_t\rho+\rho\,\partial_qm_n.
\label{eq:appendixOperatorDecomposition}
\end{equation}
Substituting this relation and \(J_n=m_n\rho\) into
Eq.~\eqref{eq:appendixQuadraticVariation} cancels the terms proportional to
\(m_n(\mathcal K_t\rho)^2\), giving
\begin{align}
\left.\dd m_n\right|_{\rm qv}
&=
-\frac{\eta\kappa}{2}
\frac{\mathcal K_t\rho}{\rho}\,
\partial_qm_n\,\dd t
\nonumber\\
&=
-\frac{\eta\kappa}{2}
\left[
2(q-\bar q_t)+\partial_q\ln\rho
\right]
\partial_qm_n\,\dd t.
\label{eq:appendixItoDrift}
\end{align}

The contribution produced by normalizing the common homodyne innovation is
therefore
\begin{align}
\left.\dd m_n\right|_{\rm hom}
={}&
\sqrt{\frac{\eta\kappa}{2}}\,
\partial_qm_n\,\dd W_t
\nonumber\\
&-
\frac{\eta\kappa}{2}
\left[
2(q-\bar q_t)+\partial_q\ln\rho
\right]
\partial_qm_n\,\dd t.
\label{eq:appendixNormalizedInnovation}
\end{align}
\section{Conversion between Weyl-ordered and fixed-order moments}
\label{app:orderingConversion}

The Wigner representation naturally returns Weyl-ordered observables.  For
nonnegative integers \(r,n\), define
\begin{equation}
\{Q^rP^n\}_{\rm W}
=\frac{1}{\binom{r+n}{r}}
\sum_{\pi\in\mathfrak W_{r,n}}\pi(Q^rP^n),
\label{eq:appendixWeylDefinition}
\end{equation}
where \(\mathfrak W_{r,n}\) is the set of distinct words containing \(r\)
copies of \(Q\) and \(n\) copies of \(P\).  Thus, for example,
\(\{QP\}_{\rm W}=(QP+PQ)/2\).  Equation
\eqref{eq:observableReconstruction} follows because multiplication by
\(q^rp^n\) in the Wigner integral represents this fully symmetrized operator.

The ordering conversion can be obtained from the
Baker--Campbell--Hausdorff identity
\begin{equation}
e^{i(sQ+tP)}
=e^{isQ}e^{itP}e^{ist/2}.
\label{eq:appendixBCHOrdering}
\end{equation}
Equating powers of \(s\) and \(t\) gives the triangular relations
\begin{align}
\{Q^rP^n\}_{\rm W}
={}&\sum_{k=0}^{\min(r,n)}
\binom{r}{k}\binom{n}{k}k!
\left(-\frac{i}{2}\right)^k
Q^{r-k}P^{n-k},
\label{eq:appendixWeylToFixed}\\
Q^rP^n
={}&\sum_{k=0}^{\min(r,n)}
\binom{r}{k}\binom{n}{k}k!
\left(\frac{i}{2}\right)^k
\{Q^{r-k}P^{n-k}\}_{\rm W}.
\label{eq:appendixFixedToWeyl}
\end{align}
For example, the Weyl-ordered moments returned by
Eq.~\eqref{eq:observableReconstruction} give
\begin{align}
QP
&=\{QP\}_{\rm W}+\frac{i}{2},
\nonumber\\
Q^2P
&=\{Q^2P\}_{\rm W}+i\{Q\}_{\rm W}.
\label{eq:appendixOrderingExamples}
\end{align}
Other fixed orderings, such as placing all \(P\)'s to the left, follow from
the same generating function or repeated use of \([Q,P]=i\).

Equations~\eqref{eq:appendixWeylToFixed} and
\eqref{eq:appendixFixedToWeyl} involve only moments with no greater momentum
or position order than the original monomial.  A single conversion contains
at most \(\min(r,n)+1\) terms and is performed only when observables are
reconstructed.  It therefore adds a finite triangular postprocessing step
and does not alter the coefficient count or the linear trajectory-length
scaling of the conditional-momentum hierarchy.  Weyl-ordered and fixed-order
mixed moments are consequently equivalent finite representations of the
observable sector used in this work.

\section{Gaussian bound for polynomial nullifier variance}
\label{app:gaussianNullifierBound}

We derive the optimized Gaussian benchmark
\(B_r(z)\) used in Eq.~\eqref{eq:nonlinearSqueezingRatio}.  Let a pure
single-mode Gaussian state have means \(q_0,p_0\) and covariance matrix
\begin{equation}
\bm V=
\begin{pmatrix}
v & s\\
s & (s^2+1/4)/v
\end{pmatrix},
\qquad v>0,
\label{eq:appendixPureGaussianCovariance}
\end{equation}
whose determinant is \(1/4\).  Since the nullifier variance is evaluated in
Weyl ordering, its moments equal averages over the positive Gaussian Wigner
function.  Because \(P+zQ^r\) is only linear in \(P\), the Weyl symbol of its
square is exactly \((p+zq^r)^2\); no additional star-product correction
appears.  We may therefore write
\begin{align}
Q&=q_0+X,
\nonumber\\
P&=p_0+\frac{s}{v}X+P_\perp,
\label{eq:appendixGaussianRegression}
\end{align}
where \(X\) and \(P_\perp\) are independent centered Gaussian variables with
variances \(v\) and \(1/(4v)\), respectively.

The displacement \(p_0\) does not affect the variance.  For fixed \(q_0\)
and \(v\), minimizing over the correlation \(s\) is equivalent to subtracting
the best linear predictor of \(Q^r\) from \(X\).  Hence
\begin{align}
\min_s\operatorname{Var}(P+zQ^r)
={}&\frac{1}{4v}
\nonumber\\
&+z^2\left[
\operatorname{Var}(Q^r)
-\frac{\operatorname{Cov}(Q^r,X)^2}{v}
\right].
\label{eq:appendixGaussianProjection}
\end{align}
Expanding $(q_0+X)^r$ in Gaussian Hermite polynomials shows that the
component orthogonal to the constant and linear terms is minimized at
$q_0=0$; for $r=2$, it is independent of $q_0$.  Thus $q_0=0$ may be
chosen without loss of optimality.

For a centered Gaussian variable,
\(\E[X^{2k}]=(2k-1)!!\,v^k\) and every odd moment vanishes.  If \(r\) is
even, \(\operatorname{Cov}(X^r,X)=0\), so
\begin{equation}
\operatorname{Var}(X^r)
=\left[(2r-1)!!-[(r-1)!!]^2\right]v^r.
\label{eq:appendixEvenGaussianResidual}
\end{equation}
If \(r\) is odd, \(\E[X^r]=0\) and
\begin{align}
\operatorname{Var}(X^r)
-\frac{\operatorname{Cov}(X^r,X)^2}{v}
={}&\Bigl[(2r-1)!!
\nonumber\\
&-r^2[(r-2)!!]^2\Bigr]v^r.
\label{eq:appendixOddGaussianResidual}
\end{align}
Both cases are summarized by the coefficient \(C_r\) in
Eq.~\eqref{eq:gaussianBoundCoefficient}.  Mixed Gaussian states cannot lower
the result: at fixed \(v\) and \(s\), the uncertainty relation replaces the
lower-right entry of Eq.~\eqref{eq:appendixPureGaussianCovariance} by a value
greater than or equal to \((s^2+1/4)/v\), adding nonnegative momentum noise.

It remains to minimize
\begin{equation}
\frac{1}{4v}+C_rz^2v^r.
\label{eq:appendixVarianceFunction}
\end{equation}
The stationary condition gives
\begin{equation}
v^{r+1}=\frac{1}{4rC_rz^2},
\label{eq:appendixOptimalVarianceCondition}
\end{equation}
and the positive solution is \(v_*(r,z)\) in
Eq.~\eqref{eq:optimalGaussianVariance}.  Substitution yields
\begin{equation}
B_r(z)=\frac{1}{4v_*}+C_rz^2v_*^r,
\label{eq:appendixFinalGaussianBound}
\end{equation}
which proves Eq.~\eqref{eq:gaussianNullifierBound}.

\bibliography{conditional_momentum_moments}

\end{document}